\newtheorem{theorem}{Theorem}[section]
\newtheorem{lemma}[theorem]{Lemma}
\newenvironment{proof}[1][Proof]{\begin{trivlist}
\item[\hskip \labelsep {\bfseries #1}]}{\end{trivlist}}
\newcommand{\ccNP}{\textrm{\textsc{NP}}}
\newcommand{\qed}{\hfill \ensuremath{\Box}}
\author{Andrew Winslow\footnote{Department of Computer Science, Tufts University. Research supported in part by NSF grants CCF-0830734 and CBET-0941538.}\\
\texttt{awinslow@cs.tufts.edu}}
\title{Inapproximability of the Smallest Superpolyomino Problem}
\date{}
\begin{document}

\maketitle

\begin{abstract}
We consider the \emph{smallest superpolyomino problem}: given a set of colored polyominoes, find the smallest polyomino containing each input polyomino as a subshape.
This problem is shown to be \ccNP-hard, even when restricted to a set of polyominoes using a single common color.
Moreover, for sets of polyominoes using two or more colors, the problem is shown to be \ccNP-hard to approximate within a $O(n^{1/3-\varepsilon})$-factor for any $\varepsilon > 0$. 
\end{abstract}

\section{Introduction}

The smallest superpolyomino problem restricted to one dimension yields the well-known \emph{smallest supersting problem}: given a set of strings (i.e. $n \times 1$ polyominoes), find the smallest string containing each string in the set as a substring. 
This problem is known to be \ccNP-hard and admits a simple greedy 3-approximation~\cite{Blum-1994}.
The approximation algorithm leads to a straightfoward $O(\log^3{n})$-approximation algorithm~\cite{Charikar-2005} for the minimum-size context-free grammar encoding a string, known as the \emph{smallest grammar problem}.

In attempting to extend this idea to solve a higher-dimensional version of the smallest grammar problem, we consider the problem of finding the smallest superpolyomino for a given set of input polyominoes.
We show that an approximation algorithm for the smallest grammar problem in two dimensions is unlikely to exist by proving that the smallest superpolyomino problem is not approximable within any $O(n^{1/3-\varepsilon})$-factor for any $\varepsilon > 0$.

Moreover, this result remains true when constrained to sets of polyominoes using only two colors.
We also show that the problem remains \ccNP-hard even when constrained to sets of polyominoes using a single color.
In this case, geometry alone is responsible for the problem difficulty, as the one-dimensional version can be solved in linear time with a single pass.

For the inapproximability result, we use a reduction from chromatic number of an arbitrary graph.
Given a graph $G = (V, E)$, the reduction uses polyominoes each representing a vertex in $V$, with two polyominoes able to compactly stack atop each other if and only if the two corresponding vertices are independent in $G$.
Forming a smallest superpolyomino then is equivalent to finding a minimum coloring of the vertices of $V$, i.e. $G$'s chromatic number, shown to be \ccNP-hard to approximate within any $(n^{1-\varepsilon})$-factor by Zuckerman~\cite{Zuckerman-2007}.

For the one-color case, a reduction from set cover is used.
A small polyomino is created for each element of the universe, and a single large polyomino encoding the sets is used as a base for the final superpolyomino.
Each element polyomino fits inside the set polyomino at locations corresponding to sets containing the element, with the exception of a single missing cell.
A small superpolyomino then is obtained by placing the element polyominoes at a small number of locations inside the set polyomino, equivalent to covering the universe with a small number of sets.

\section{Definitions}

A polyomino $P = (S, C)$ is defined by a connected set of points on the square lattice (called \emph{cells}) containing $(0, 0)$, and a coloring $C: S \rightarrow \sigma$ mapping the cells of $P$ to a set of colors $\sigma$, e.g. cell $(3, 1)$ is red, cell $(3, 2)$ is gray, etc.
As shorthand, we denote the color of the cell $C((x, y))$ as $P(x, y)$, and $|P|$ as the number of cells in $P$, i.e. the \emph{size} of $P$.
A \emph{translation} of $P$ is a polyomino $P' = (S', L')$ and coordinate offset $(\delta_x, \delta_y)$ such that $S' = \{(x + \delta_x, y + \delta_y) \mid (x, y) \in S\}$ and $C'((x + \delta_x, y + \delta_y)) = C(x, y)$. 

Two polyominoes $P_u = (S_u, C_u)$ and $P_v = (S_v, C_v)$ at some translation $(\delta_x, \delta_y)$ are \emph{compatible} if for each $(x, y)$, either $P_v(x, y)$ or $P_u(x, y)$ is empty or $P_v(x, y) = P_u(x + \delta_x, y + \delta_y)$.
Similarly, a polyomino $P = (S, C)$ is a \emph{superpolyomino} of $P' = (S', C')$ if there exists a translation $(\delta_x, \delta_y)$ of $P'$ such that for each $(x, y)$, either $(x+\delta_x, y+\delta_y) \not \in S'$ or $P'(x+\delta_x, y+\delta_y) = P(x, y)$, i.e. there is a translation of $P'$ such that $P'$ is compatible with $P$ and lies entirely in $P$. 

\section{Reduction}
\label{sec:reduction}

Given a graph $G = (V, E)$, each vertex $v \in V$ is converted into a polyomino $P_v = (S_v, C_v)$ that encodes $v$ and the neighbors of $v$ in $G$ (see Figure~\ref{fig:reduction-ex}).
Each $P_v$ is a rectangular $2|V| \times |V|$ polyomino with up to $|V|-1$ single squares removed and lower-left corner at $(0, 0)$. 
The four corners of all $P_v$ have a common set of four colors: green, blue, purple, and orange.
Cells at locations $\{(2i + 1, 1) \mid 0 \leq i < |V|\}$ are colored black if $v_i = v$, $r$ if $(v, v_i) \in E$, or are empty locations if $v_i$ is not $v$ or a neighbor of $v$.
All remaining cells have a common gray color.

\begin{figure}[ht]
\centering
\includegraphics[width=0.7\columnwidth]{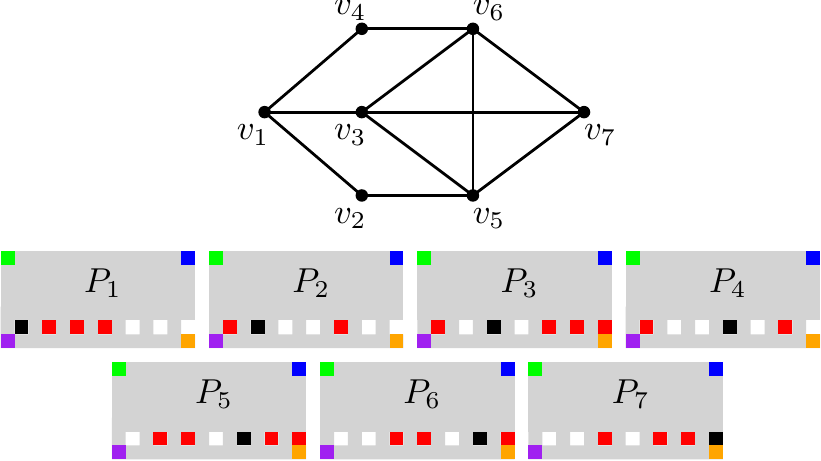}
\caption{An example of the set of polyominoes generated from an input graph by the reduction.}
\label{fig:reduction-ex}
\end{figure}

Consider how two polyominoes $P_u$ and $P_v$ can overlap depending upon the relationship of $u$ and $v$.
Because of the four distinct corner colors, $P_u$ and $P_v$ can only overlap when these four locations in $P_v$ are translated to the same locations in $P_u$.
In this translation, the cells at location $(2i+1, 1)$ in $P_u$ and $P_v$ are compatible exactly when $(u, v) \not\in E$, i.e. $u$ and $v$ are not neighbors. 
All other cells are colored gray and are thus compatible.

The superpolyomino formed by a pair of compatible $P_u$ and $P_v$ in this translation has the common set of four colored corner cells and many gray cells, and has two black cells and a number of red cells corresponding to the combined neighborhoods of $u$ and $v$.
Then by induction, any set of polyominoes can overlap if and only if they form an independent set.
Moreover, if they overlap, they overlap using a set of translations in which the four corners of all polyominoes are placed at four common locations. 

Because the polyominoes can only overlap in this constrained way, any superpolyomino of the polyominoes $\{P_v \mid v \in V\}$ consists of a number of \emph{decks} of superimposed polyominoes corresponding to independent sets of vertices in $G$ arranged disjointly to form a single connected polyomino (see Figure~\ref{fig:reduction-soln-ex}). 

\begin{figure}[ht]
\centering
\includegraphics[width=0.7\columnwidth]{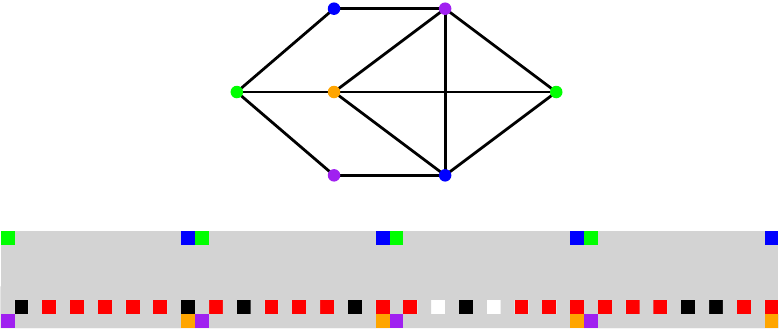}
\caption{An example of a corresponding 4-deck superpolyomino and 4-colored graph.}
\label{fig:reduction-soln-ex}
\end{figure}

Recall that each $P_v$ is a $2|V| \times |V|$ rectangle with $|V|$ cells colored black, red, or are not present.
The size of $P_v$ is then between $2|V|^2 - |V| + 1$ and $2|V|^2$ depending upon the number of neighbors of $v$, and each deck of polyominoes also has size in this range.

\begin{lemma}
\label{lem:deck-to-coloring}
For a graph $G = (V, E)$, there exists a superpolyomino of size at most $2k|V|^2$ for polyominoes $\{P_v \mid v \in V\}$ if and only if the vertices of $V$ can be $k$-colored.
\end{lemma}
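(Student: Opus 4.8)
The plan is to prove a biconditional by establishing the two directions separately, using the structural characterization of superpolyominoes already developed in Section~\ref{sec:reduction}. The central fact I would lean on is the claim that any superpolyomino of $\{P_v \mid v \in V\}$ decomposes into some number of disjoint \emph{decks}, where each deck is a superposition of polyominoes whose corresponding vertices form an independent set in $G$, and each deck occupies a $2|V| \times |V|$ rectangular region of size between $2|V|^2 - |V| + 1$ and $2|V|^2$. Because the four distinctly colored corners force all polyominoes in a single deck to share a common alignment, and force distinct decks to be placed disjointly, the number of decks is exactly the number of independent sets used to cover $V$, i.e.\ the number of color classes in a vertex coloring.

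\begin{proof}
For the forward direction, I would start from a superpolyomino of size at most $2k|V|^2$ and extract a $k$-coloring. By the decomposition established above, the superpolyomino consists of $d$ disjoint decks, each of size strictly greater than $2|V|^2 - |V|$, so the total size exceeds $d(2|V|^2 - |V|)$. Combining this with the size bound $2k|V|^2$ would give an inequality bounding $d$; the arithmetic should force $d \leq k$ (here I would verify that the gap between $2|V|^2 - |V| + 1$ and $2|V|^2$ is small enough that no more than $k$ decks can fit, which is the one routine calculation to check carefully). Each deck corresponds to an independent set, and together the decks cover all of $V$ since every $P_v$ must appear in the superpolyomino; assigning color $i$ to all vertices whose polyomino lies in deck $i$ then yields a proper $k$-coloring, as two vertices sharing a deck are independent hence nonadjacent.

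For the reverse direction, given a $k$-coloring I would construct the superpolyomino directly. Each color class is an independent set, so by the induction argument cited in the text its polyominoes are mutually compatible and can be superimposed into a single deck aligned at the four common corner locations. This produces at most $k$ decks, each of size at most $2|V|^2$, which I arrange side by side into a single connected $2k|V| \times |V|$ (or similarly shaped) polyomino of total size at most $2k|V|^2$, confirming that the required superpolyomino exists.

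The main obstacle I anticipate is the counting argument in the forward direction: I must rule out the possibility that a clever superpolyomino packs the polyominoes more efficiently than one deck per color class, for instance by interleaving decks or exploiting the missing cells to overlap regions that the corner-color argument does not obviously forbid. The resolution is to invoke the rigidity already proven — that the four distinct corner colors force every overlap to be a full-rectangle alignment and force distinct decks to be disjoint — so that the only admissible configurations are unions of well-separated decks, and then the size arithmetic closes the argument. I would make sure the inequality $d(2|V|^2 - |V| + 1) \leq 2k|V|^2$ genuinely implies $d \leq k$ rather than merely $d \leq k + O(1)$, which is where the precise size bounds on individual decks matter.
\end{proof}
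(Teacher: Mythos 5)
Your proposal is correct and follows essentially the same route as the paper: both rely on the deck decomposition from Section~\ref{sec:reduction}, bound each deck's size between $2|V|^2-|V|+1$ and $2|V|^2$, and use the resulting arithmetic (your inequality $d(2|V|^2-|V|) < 2k|V|^2$ is just a rearrangement of the paper's comparison $(k-1)(2|V|^2) < k(2|V|^2-|V|)$) to force at most $k$ decks, with the reverse direction built identically by placing one deck per color class. The one calculation you flagged does go through, since $d < k\cdot 2|V|/(2|V|-1) < k+1$ whenever $k \leq |V|$, and $d \leq |V|$ trivially otherwise.
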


\begin{proof}
First, consider extreme sizes of superpolyominoes consisting of $k$ and $k-1$ decks.
For any $V$ and $k$ with $1 \leq k \leq |V|$: 
$$(k-1)(2|V|^2) = 2k|V|^2 - 2|V|^2 < 2k|V|^2 - k|V| = k(2|V|^2 - |V|)$$
That is, the size of any superpolyomino of $k-1$ decks is smaller than the size of any superpolyomino of $k$ decks.

We now prove both implications of the lemma.
First, assume the superpolyomino of size at most $2k|V|^2$ exists.
Then the superpolyomino must consist of at most $k$ decks.
Each deck is the superposition of a set of polyominoes forming an independent set, so $G$ can be $k$-colored.

Next, assume that $G$ can be $k$-colored.
Then the polyominoes $\{P_v \mid v \in V\}$ can be translated to form $k$ decks, one for each color, each with size at most $2|V|^2$.
Placing these decks adjacent to each other yields a superpolyomino of size at most $2k|V|^2$.
\qed
\end{proof}

Note that only $|V|$ cells of each $P_v$ are distinct and depend on $v$, while the other $2|V|^2-|V|$ cells are identical for all $P_v$.
The extra cells are needed for the first inequality in Lemma~\ref{lem:deck-to-coloring}, and they effectively ``drown out'' variation in the size of each deck due to missing cells.

\begin{theorem}
\label{thm:ssp-inapprox}
The smallest superpolyomino problem is \ccNP-hard to approximate within a factor of $O(n^{1/3 - \varepsilon})$ for any $\varepsilon > 0$.
\end{theorem}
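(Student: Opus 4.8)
The plan is to treat the construction of Section~\ref{sec:reduction} as an approximation-preserving reduction from graph coloring and then invoke Zuckerman's inapproximability of the chromatic number. First I would record that the reduction runs in polynomial time and, crucially, controls the instance size: there are $|V|$ polyominoes, each of size $\Theta(|V|^2)$, so the total input size is $n = \Theta(|V|^3)$, i.e.\ $|V| = \Theta(n^{1/3})$. This cubic relationship is exactly what will convert a factor of $O(n^{1/3-\varepsilon})$ for the superpolyomino problem into a (nearly) $|V|^{1-\varepsilon'}$ factor for coloring.

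Next I would pin down the optimal superpolyomino size in terms of $\chi(G)$. By the inequality established in the proof of Lemma~\ref{lem:deck-to-coloring}, a superpolyomino with fewer decks is always smaller, so an optimal superpolyomino uses the minimum possible number of decks; since each deck is an independent set and the decks must together cover $V$, this minimum is exactly $\chi(G)$. Combined with the per-deck size bounds $2|V|^2 - |V| + 1 \le |\text{deck}| \le 2|V|^2$, this sandwiches the optimum $OPT$ between $\chi(G)\,(2|V|^2 - |V| + 1)$ and $\chi(G)\cdot 2|V|^2$.

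Then I would show that a hypothetical $\alpha$-approximation algorithm for the superpolyomino problem yields an approximation for the chromatic number whose ratio is only a $(1 + o(1))$ factor worse. Given a returned superpolyomino of size $S \le \alpha\cdot OPT$ consisting of $d$ decks, each deck is an independent set and hence supplies a $d$-coloring of $G$; since each deck has size at least $2|V|^2 - |V| + 1$, we obtain $d \le S/(2|V|^2 - |V| + 1) \le \alpha\,\chi(G)\,\frac{2|V|^2}{2|V|^2 - |V| + 1}$. The trailing ratio tends to $1$ as $|V| \to \infty$, so this is an $\alpha\,(1 + o(1))$-approximation for $\chi(G)$.

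Finally I would balance exponents. Taking $\alpha = O(n^{1/3-\varepsilon}) = O(|V|^{1-3\varepsilon})$ yields a coloring approximation of $O(|V|^{1-3\varepsilon})$, which for large $|V|$ is below $|V|^{1-\varepsilon}$; by Zuckerman~\cite{Zuckerman-2007} approximating $\chi(G)$ within $|V|^{1-\varepsilon}$ is \ccNP-hard, a contradiction (assuming \textrm{P} $\neq$ \ccNP). The main obstacle, and the only delicate point, is controlling the $(1+o(1))$ slack introduced by the variable deck sizes and confirming that it, together with the constant factors hidden in $n = \Theta(|V|^3)$, is absorbed by an arbitrarily small loss in the exponent. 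This is precisely the role played by the ``drowning out'' cells noted after Lemma~\ref{lem:deck-to-coloring}: they keep each deck's size within a $(1+o(1))$ factor of $2|V|^2$, so the reduction remains tight enough to preserve the polynomial inapproximability threshold.
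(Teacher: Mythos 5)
Your proposal is correct and follows essentially the same route as the paper: the $n = \Theta(|V|^3)$ size accounting, the deck-size sandwich from Lemma~\ref{lem:deck-to-coloring}, and the appeal to Zuckerman's $|V|^{1-\varepsilon}$ inapproximability of chromatic number. You simply spell out more explicitly than the paper does how the $(1+o(1))$ slack from variable deck sizes and the exponent conversion $n^{1/3-\varepsilon} = \Theta(|V|^{1-3\varepsilon})$ are absorbed, which is a faithful elaboration rather than a different argument.
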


\begin{proof}
Consider the smallest superpolyomino problem for the polyominoes generated from a graph $G = (V, E)$ with chromatic number $k$.
There are $|V|$ of these polyominoes, each of size $\Theta(|V|^2)$, so the polyominoes have total size $n = \Theta(|V|^3)$.
By Lemma~\ref{lem:deck-to-coloring}, a superpolyomino of size between $(2|V|^2-|V|)k'$ and $2|V|^2k'$ exists if and only if there exists a $k'$-coloring of $G$.
Then by Zuckerman~\cite{Zuckerman-2007}, finding a superpolyomino such that $(2|V|^2-|V|)k'/(2|V|^2)k = O(|V|^{1-\varepsilon}) = \Theta(n^{1/3-\varepsilon})$ is \ccNP-hard.
\end{proof}

\section{Extensions}

The result of Theorem~\ref{thm:ssp-inapprox} also holds when constrainted to sets of polyominoes using at most two colors by converting each cell into a unique $8 \times 8$ \emph{macrocell} (seen in Figure~\ref{fig:bicolor-adaption}).
The cells on the boundary of the macrocell are colored black and the interior cells are colored gray, except for the three cells $(2, 2)$, $(3, 3)$, and $(4, 4)$. 
These cells are used to encode a binary value between 0 and 7 corresponding to the color of the cell in the original polyomino, with gray indicating a 0 and black a 1.

\begin{figure}[ht]
\centering
\includegraphics[width=1.0\columnwidth]{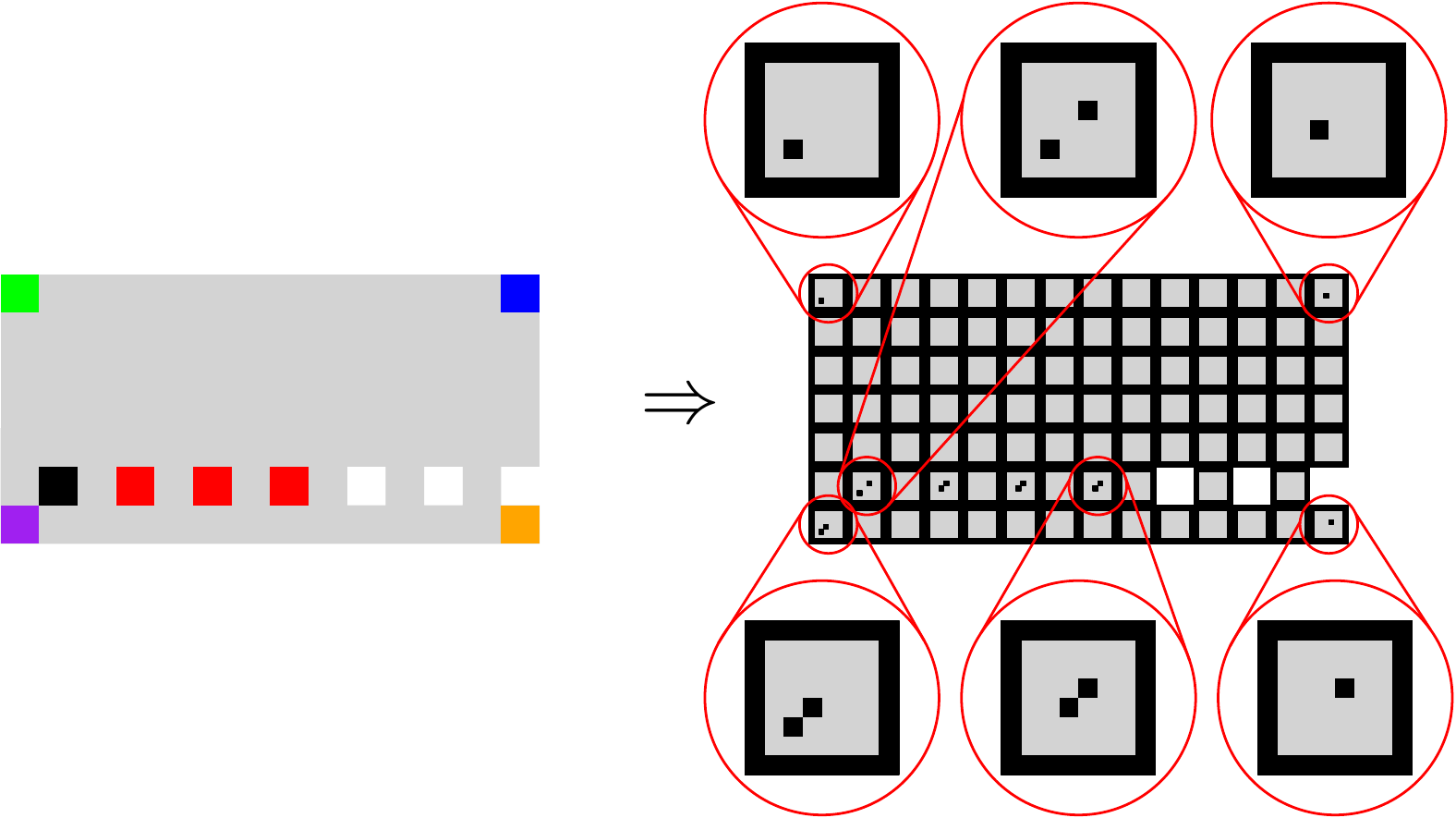}
\caption{Converting a reduction polyomino (left) to a two-color reduction polyomino (right).}
\label{fig:bicolor-adaption}
\end{figure}

Because all polyominoes are simply scaled by a factor of $8 \cdot 8 = 64$, following the proof of Lemma~\ref{lem:deck-to-coloring} gives a similar result, but with a superpolyomino size of $128k|V|^2$ instead of $2k|V|^2$.
Because each polyomino still has size $\Theta(|V|^2)$, the same inapproximability ratio as Theorem~\ref{thm:ssp-inapprox} holds:

\begin{theorem}
The smallest superpolyomino problem for two-color polyomino sets is \ccNP-hard to approximate within a factor of $O(n^{1/3 - \varepsilon})$ for any $\varepsilon > 0$.
\end{theorem}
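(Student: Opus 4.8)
The plan is to reduce directly to the multicolor result already established in Theorem~\ref{thm:ssp-inapprox}, rather than rerun the reduction from chromatic number. The construction of Section~\ref{sec:reduction} uses at most seven colors (green, blue, purple, orange, black, red, and gray), so I would replace every cell of every $P_v$ by the $8 \times 8$ macrocell of Figure~\ref{fig:bicolor-adaption}: a one-cell-thick black frame, a gray interior, and the three interior cells $(2,2)$, $(3,3)$, $(4,4)$ set to encode, in binary over the identification $\{\text{gray}, \text{black}\} = \{0, 1\}$, the index of the original cell's color. Three bits suffice since $7 \leq 2^3$, and an empty original cell is represented by the absence of the corresponding macrocell. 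This map scales every polyomino uniformly by a factor $8 \cdot 8 = 64$ and uses only the colors black and gray.

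The heart of the argument is to show that this substitution preserves the compatibility relation exactly, so that the entire deck/independent-set analysis of Section~\ref{sec:reduction} survives verbatim. I would establish this in two steps. First, I would argue that the black frames force \emph{macro-grid alignment}: the frames make black cells recur on every eighth row and column, so any translation that is not a multiple of $8$ in both coordinates places a black frame cell of one polyomino onto a gray interior cell of the other, violating compatibility; hence two macrocell polyominoes can overlap only when their $8 \times 8$ grids coincide. Second, once the grids coincide, each pair of superimposed macrocells is compatible precisely when their three encoding cells agree, i.e.\ when the two underlying original cells carry the same color or one is empty—exactly the original compatibility condition. In particular the corner macrocells encode the four formerly distinct corner colors, so the rigid corner-to-corner registration that drove the original reduction is reproduced, and by induction a set of macrocell polyominoes overlaps if and only if the corresponding vertices form an independent set.

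With the compatibility relation preserved, the proof of Lemma~\ref{lem:deck-to-coloring} goes through after multiplying every size by $64$: a superpolyomino of size at most $128k|V|^2$ exists if and only if $V$ is $k$-colorable, and the strict separation between $k-1$ and $k$ decks still holds because it is a uniform rescaling of the original inequality. The $|V|$ input polyominoes each still have size $\Theta(|V|^2)$, so the total input size is $n = \Theta(|V|^3)$ as before, and feeding this into Zuckerman's~\cite{Zuckerman-2007} $(|V|^{1-\varepsilon})$-inapproximability of chromatic number yields the same $\Theta(n^{1/3-\varepsilon})$ factor.

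The main obstacle is the first step above. With only two colors available, the distinct corner colors of the original reduction no longer directly pin down a unique relative placement, so one must rule out all \emph{shifted} overlaps in which black and gray coincide by accident. The black frame is precisely what makes this work, and the care needed lies in verifying that no non-grid-aligned translation—including a shift of a single cell in one coordinate—can ever be compatible, so that alignment is genuinely forced to multiples of $8$.
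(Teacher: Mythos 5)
Your proposal matches the paper's argument essentially exactly: replace each cell by the $8 \times 8$ black-framed macrocell with three interior bit cells, observe that all sizes scale uniformly by $64$ so Lemma~\ref{lem:deck-to-coloring} goes through with $128k|V|^2$ in place of $2k|V|^2$, and conclude that since each polyomino still has size $\Theta(|V|^2)$ the $\Theta(n^{1/3-\varepsilon})$ ratio from Theorem~\ref{thm:ssp-inapprox} is unchanged. The only difference is that you spell out the grid-alignment and bit-agreement argument that the paper leaves implicit, which is a point in your favor rather than a divergence.
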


We now consider the smallest superpolyomino problem for one-color polyomino sets.
A reduction from set cover is done using families of polyominoes as seen in Figures~\ref{fig:monocolor-polyomino} and~\ref{fig:monocolor-reduction}.
Given a universe $U = \{1, 2, \dots, n\}$ and set of sets $S = \{S_1, \dots, S_m\}$ with $\bigcup_{1 \leq i \leq m}S_i = U$, two types of polyominoes are created.
First, a set of \emph{element polyominoes} $\{P_1, P_2, \dots, P_m\}$ is created, each consisting of a $(n+1) \times (n+1)$ \emph{base}, a $1 \times 2n$ \emph{flagpole}, and a $n \times 1$ \emph{flag}.
The lower-left corner of the three components for polyomino $i$ are placed at $(0, 0)$, $(0, n)$, and $(1, n + 2i)$, respectively. 

\begin{figure}[ht]
\centering
\includegraphics[width=0.33\columnwidth]{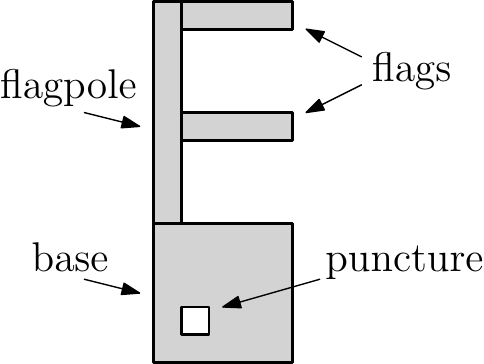}
\caption{The components of element polyominoes and set gadgets used in the one-color reduction.}
\label{fig:monocolor-polyomino}
\end{figure}

Second, a \emph{set polyomino} $\overline{P}$ is created, consisting of a set of \emph{punctured bases} (a base with cell $(1, 1)$ missing) arranged horizontally and attached by single cells, each with a flagpole and some number of flags (called a \emph{set gadget}).
Each flagpole corresponds to a set $S_i = \{e_1, e_2, \dots, e_l\}$, with flags placed at location $((n+2)(i-1) + 1, n + 2e_j)$ for $1 \leq j \leq l$.

\begin{figure}[ht]
\centering
\includegraphics[width=0.9\columnwidth]{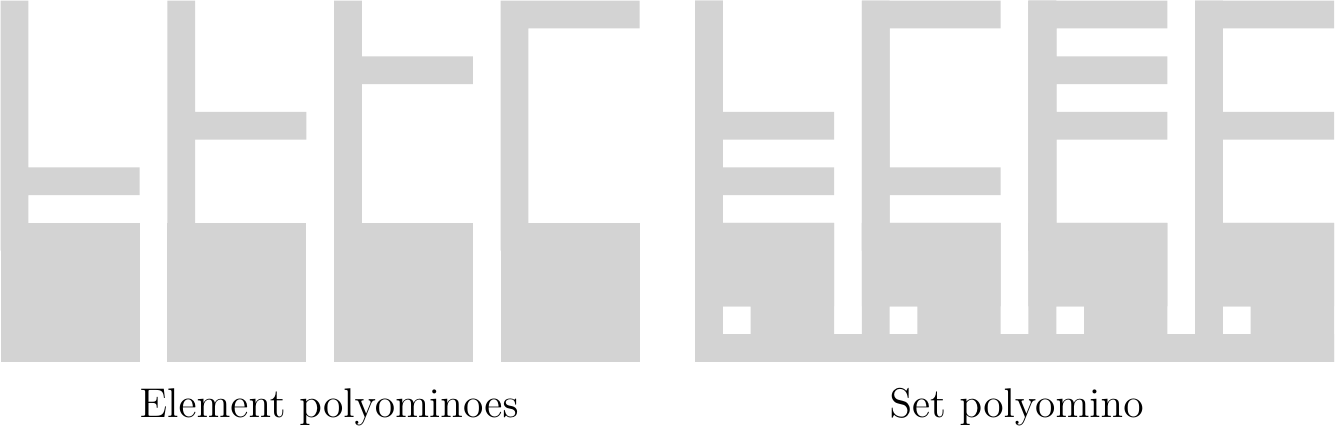}
\caption{The set of polyominoes produced from the reduction from minimum set cover to smallest superpolyomino for the set $\{\{1, 2\}, \{1, 4\}, \{2, 3, 4\}, \{2, 4\}\}$.}
\label{fig:monocolor-reduction}
\end{figure}

Consider forming a superpolyomino of the set polyomino $\overline{P}$ and some element polyomino $P_i$ corresponding to an element $i \in U$.
Define a translation of $P_i$ that places the lower-left corner of its base at the lower-left corner of a punctured base of $\overline{P}$ as an \emph{alignment} of the bases.
Aligning the base of $P_i$ with a punctured base of $\overline{P}$ corresponding to a set containing $i$ results in a superpolyomino of size $|\overline{P}|+1$.
If the base of $P$ is not aligned with a punctured base of $\overline{P}$, then the resulting superpolyomino is at least as large as some superpolymino of $\overline{P}$ and \emph{all} element polyominoes:

\begin{lemma}
\label{lem:no-align-has-big-cost}
Any superpolyomino of the set polyomino $\overline{P}$ and some element polyomino $P_i$ in which the base of $P_i$ is not aligned with the punctured base of a set gadget not corresponding to a set containing has size at least $\overline{P} + n$. 
\end{lemma}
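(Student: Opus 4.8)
The plan is to use the fact that in the one-color setting every pair of cells is compatible, so placing $P_i$ at a translation $(a,b)$ produces a superpolyomino whose size is exactly $|\overline{P}|$ plus the number of cells of $P_i$ that land outside $\overline{P}$. Calling these the \emph{new} cells, the lemma reduces to the purely geometric claim that every translation of $P_i$, except the alignment of its base with a punctured base of a set gadget whose set contains $i$, forces at least $n$ new cells. Recording the gadget positions explicitly (gadget $j$ has its punctured base at $x$-offset $(n+2)(j-1)$, its flagpole in the single column $x=(n+2)(j-1)$, and its flags in the columns $x \in [(n+2)(j-1)+1,(n+2)(j-1)+n]$ at heights $n+2e$ for $e \in S_j$), the ``good'' placement is precisely $b=0$, $a=(n+2)(j-1)$ with $i \in S_j$. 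I would carve this out and split all remaining translations into a trichotomy according to which component of $P_i$ is forced to protrude.

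The three cases, and the component exposed in each, are as follows. If $b \neq 0$ (vertical misalignment), the solid base of $P_i$ is exposed: when $b<0$ its bottom row lies entirely below $\overline{P}$, all of whose cells have $y \geq 0$; and when $b>0$ I would exhibit a base row lying above every gadget base that $\overline{P}$ meets in at most one cell. If $b=0$ but $a$ is not a gadget flagpole column (horizontal misalignment at the correct height), the tall $1 \times 2n$ flagpole of $P_i$ is exposed, since a thin column can meet $\overline{P}$ only where flags cross it. Finally, if $b=0$ and $a=(n+2)(j-1)$ but $i \notin S_j$ (correct alignment, wrong set), the flag of $P_i$ at height $n+2i$ has no matching flag in gadget $j$ and is covered by nothing else, so all $n$ of its cells are new.

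The main obstacle is the geometric counting that converts each exposed component into a guarantee of at least $n$ new cells, and here two subtleties arise. In the $b>0$ subcase, choosing the topmost base row can fail, because a full-width flag of $\overline{P}$ may cover it; the remedy is to pick a base row whose height has an \emph{odd} offset from $n$ (offset $b$ if $b$ is odd, offset $b-1$ if $b$ is even), since flags occupy only even offsets and hence miss this row, leaving only the at-most-one flagpole column that an $(n+1)$-wide window can contain. In the flagpole subcase I would bound the cells of $\overline{P}$ lying in the column $x=a$: at most one gadget's flags can cross a fixed column, because consecutive gadgets' flag-spans are separated by a flagpole-and-connector gap, and each such flag meets the column in a single cell; since the flag for element $n$ sits at height $3n$, just above the flagpole's top row $3n-1$, at most $n-1$ flags cross, and together with at most one base cell this covers at most $n$ of the $2n$ flagpole cells, leaving at least $n$ new. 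I would conclude by verifying that, once the good placement is removed, the three cases are exhaustive and that each yields size at least $|\overline{P}|+n$.
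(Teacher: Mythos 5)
Your proof is correct and follows essentially the same strategy as the paper's: a case analysis on how the placement of $P_i$ deviates from a correct alignment (vertical misalignment, horizontal misalignment, or alignment with a gadget whose set does not contain $i$), in each case exhibiting at least $n$ cells of $P_i$ left uncovered by $\overline{P}$. If anything, your treatment is more careful than the paper's --- the odd-offset row trick for $b>0$ and the explicit count of flags crossing the flagpole column patch up details that the paper's version glosses over.
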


\begin{proof}
First, if the base of $P_i$ is aligned with the punctured base of a set gadget that does not correspond to a set containing $i$, then the cells of the flag of $P_i$ do not overlap with the cells of $\overline{P}$.
The flag has $n$ cells, and so the superpolyomino has at least $|\overline{P}| + n$.

Next, if the base of $P_i$ is \emph{not} aligned with any punctured base in $\overline{P}$, then it must be horizontally or vertically misaligned (or both).
If the base is horizontally misaligned, then some column that does not contain any punctured base of $\overline{P}$ contains cells from the base of $P_i$.
Such a column then must contain at least $n+1$ cells of $P_i$ and at most 1 cell of $\overline{P}$, so the superpolyomino has size at least $|\overline{P}| + n$.

If the base is not horizontally misaligned but is vertically misaligned, then some row containing a row of the base of $P$ has at most 1 cell of $\overline{P}$.
This row is either: 1. a row entirely below or above the set gadget $P$ is aligned with or 2. a row that contains a cell of the flagpole but no flag. 
In either case, the superpolyomino has at least $n+1$ cells of $P$ and at most 1 cell of $\overline{P}$, so the superpolyomino has size at least $|\overline{P}| + n$.
\end{proof}

Intuitively, Lemma~\ref{lem:no-align-has-big-cost} implies that ``playing by the rules'' and aligning the base of each element polyomino with some punctured base of the set polyomino $\overline{P}$ is always better than ``cheating'' by not aligning some element polyomino with any punctured base.
So finding a minimum superpolyomino is equivalent to deciding which punctured base each element polyomino's base should be aligned with, with a unit cost for each punctured base ``patched'' by an element polyomino's aligned base (see Figure~\ref{fig:monocolor-reduction-soln}).

\begin{figure}[ht]
\centering
\includegraphics[width=0.43\columnwidth]{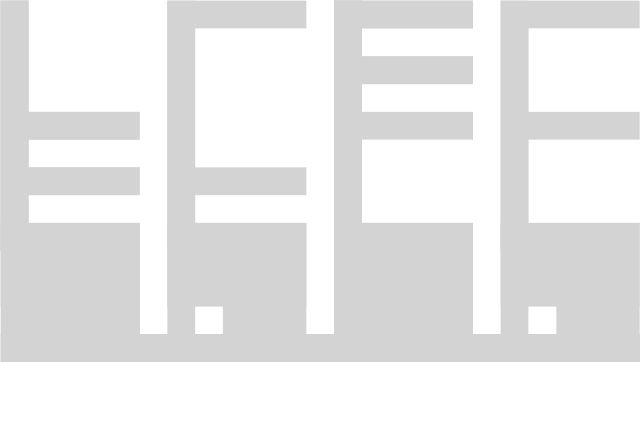}
\caption{The smallest superpolyomino of the polyominoes in Figure~\ref{fig:monocolor-reduction}, corresponding to the set cover $\{S_1, S_3$\}.}
\label{fig:monocolor-reduction-soln}
\end{figure}

\begin{theorem}
The smallest superpolyomino problem for one-color polyomino sets is \ccNP-hard.
\end{theorem}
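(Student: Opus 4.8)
The plan is to show that the reduction from minimum set cover is correct: an instance $(U, S)$ with $|U| = n$ admits a set cover of size $k$ if and only if the smallest superpolyomino of the polyominoes $\{\overline{P}, P_1, \dots, P_n\}$ has size exactly $|\overline{P}| + k$. Since minimum set cover is a classic \ccNP-hard problem, establishing this equivalence immediately yields \ccNP-hardness of the one-color smallest superpolyomino problem. The reduction is polynomial, as the set polyomino and each element polyomino have size $O((n+m) \cdot (n+m))$, and all constructions are explicit.

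First I would establish the easy direction: a set cover of size $k$ yields a superpolyomino of size $|\overline{P}| + k$. Given a cover $\{S_{i_1}, \dots, S_{i_k}\}$, for each element $e \in U$ fix a set $S_{i_j}$ in the cover containing $e$, and align the base of the element polyomino $P_e$ with the punctured base of the set gadget for $S_{i_j}$. By the earlier analysis, each such aligned placement contributes cost exactly $1$ (the single cell $(1,1)$ that ``patches'' the puncture) because the flag and flagpole of $P_e$ already coincide with cells of $\overline{P}$ when $e \in S_{i_j}$. Crucially, several element polyominoes aligned to the same set gadget still patch only one puncture, so the total added cost is the number of \emph{distinct} set gadgets used, which is at most $k$. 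This gives a superpolyomino of size at most $|\overline{P}| + k$.

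For the converse I would argue that any superpolyomino of size less than $|\overline{P}| + n$ must arise from a legal set of alignments corresponding to a set cover. Lemma~\ref{lem:no-align-has-big-cost} is the workhorse here: it guarantees that if any element polyomino $P_i$ is \emph{not} aligned with the punctured base of a set gadget corresponding to a set containing $i$, then the superpolyomino already has size at least $|\overline{P}| + n$. Hence in any superpolyomino of size strictly less than $|\overline{P}| + n$, every element polyomino is aligned with a punctured base whose set actually contains the corresponding element. The set gadgets thus used form a cover of $U$, and the size of the superpolyomino equals $|\overline{P}|$ plus the number of distinct punctured bases that are patched, which is exactly the number of distinct sets in this cover. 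So a superpolyomino of size $|\overline{P}| + k$ with $k < n$ yields a set cover of size $k$.

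The main obstacle is the bookkeeping in the converse: I must rule out the possibility that a cheaper superpolyomino arises from some placement not captured by the alignment model — for instance, partial overlaps of bases, stacking multiple element polyominoes in ways that share cells unexpectedly, or exploiting the flagpole and flag geometry to pack more tightly than a clean set-cover correspondence would suggest. The resolution is precisely Lemma~\ref{lem:no-align-has-big-cost}, which bounds below the cost of every non-aligned configuration by $|\overline{P}| + n$; combined with the observation that a genuine cover of size at most $n$ already achieves cost at most $|\overline{P}| + n$, it follows that optimal superpolyominoes never cheat, and the set-cover correspondence is exact. Care is also needed to confirm that distinct element polyominoes aligned to the same gadget do not interfere (their flags sit at different heights $n + 2e_j$, so they stack cleanly), ensuring the cost truly counts distinct sets rather than distinct elements. \qed
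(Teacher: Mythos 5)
Your proposal is correct and follows essentially the same route as the paper: invoke Lemma~\ref{lem:no-align-has-big-cost} to force every element polyomino into a legal alignment, then read off the correspondence between patched punctured bases and a set cover, concluding via the \ccNP-hardness of set cover. You simply spell out the bookkeeping (the forward construction, the cost of one per distinct patched gadget, and the non-interference of flags at distinct heights) that the paper leaves implicit.
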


\begin{proof}
By Lemma~\ref{lem:no-align-has-big-cost}, the smallest superpolyomino of any set of polyominoes generated by the reduction is one resulting from aligning the base of each element polyomino $P_i$ base with the punctured base of a set gadget of $\overline{P}$ corresponding to a set containing $i$.
So a superpolyomino of size $|\overline{P}| + k$ exists if any only if there exists a set cover of size $k$.
Set cover is known and loved by many, and is also \ccNP-hard~\cite{Karp-1972}. 
\end{proof}


\end{document}